\def\ps@pprintTitle{%
 \let\@oddhead\@empty
 \let\@evenhead\@empty
 \def\@oddfoot{}%
 \let\@evenfoot\@oddfoot}
\newcommand{\floor}[1]{\left\lfloor #1 \right\rfloor}
\newtheorem{theorem}{Theorem}[section]       % numérotés par section
\newtheorem{property}{Property}[section]     % numérotés par section
\title{New modular multiplication and division algorithms based on continued
  fraction expansion}
\author[upmc]{Mourad Gouicem} 
\address[upmc]{UPMC Univ Paris 06 and CNRS UMR 7606, LIP6 \\4 place Jussieu,
  F-75252, Paris cedex 05, France% \ead{mourad.gouicem@lip6.fr}
}
\begin{document}

\begin{abstract}
  In this paper, we apply results on number systems based on continued fraction
  expansions to modular arithmetic. We provide two new algorithms in order to
  compute modular multiplication and modular division. The presented algorithms
  are based on the Euclidean algorithm and are of quadratic complexity.
\end{abstract}

% \begin{keyword}
%   Algorithms \sep Continued fraction \sep Euclidean algorithm \sep Ostrowski
%   number system \sep Modular arithmetic
% \end{keyword}

\maketitle

\section{Introduction}
Continued fractions are commonly used to provide best rational approximations of
an irrational number. This sequence of best rational approximations
$(p_i/q_i)_{i\in\mathbb{N}}$ is called the convergents' sequence. In the
beginning of the 20$^{th}$ century, Ostrowski introduced number systems derived
from the continued fraction expansion of any irrational $\alpha$
\cite{berthe2009diophantine}. He proved that the sequence
$(q_i)_{i\in\mathbb{N}}$ of the denominators of the convergents of any
irrational $\alpha$ forms a number scale, and any integer can be uniquely
written in this basis. In the same way, the sequence $(q_i\alpha -
p_i)_{i\in\mathbb{N}}$ also forms a number scale.

In this paper, we show how such number systems based on continued fraction
expansions can be used to perform modular arithmetic, and more particularly
modular multiplication and modular division. The presented algorithms are of
quadratic complexity like many of the existing implemented algorithms
\cite[Chap. 2.4]{brent2010modern}. Furthermore, they present the advantage of
being only based on the extended Euclidean algorithm, and to integrate the
reduction step.

In the following, we will first introduce notations and some properties of the
number systems based on continued fraction expansions in Section
\ref{sect_num_sys_cf}. Then we describe the new algorithms in Section
\ref{sect_mod_arith}. Finally, we give elements of complexity analysis of these
algorithms in Section \ref{sect_comp}, and perspectives in Section
\ref{sect_persp}.

%%% Local Variables: 
%%% mode: latex
%%% TeX-master: "article"
%%% End: 

\section{Number systems and continued fractions}
\label{sect_num_sys_cf}
\subsection{Notations}
First, we give some notations on the continued fraction expansion of an
irrational $\alpha$ with $0<\alpha<1$ \cite{Khinchin92}. We call the
\emph{tails} of the continued fraction expansion of $\alpha$ the real sequence
$(r_i)_{i\in\mathbb{N}}$ defined by
\begin{align*}
  r_0 & = \alpha,\\
  r_i & = 1/r_{i-1} - \lfloor 1/r_{i-1} \rfloor.
\end{align*}
We denote $(k_i)_{i\in\mathbb{N}}$ the integer sequence of the partial quotients
of the continued fraction expansion of $\alpha$. They are computed as $k_i =
\lfloor 1/r_{i-1} \rfloor$. We have
\[
\alpha = \cfrac{1} 
{k_1 + 
  \cfrac{1} 
  {k_2 + 
    \cfrac{1}
    {\ddots + 
      \cfrac{1}
      {k_{i} + {r_i}}}}} := [0; k_1, k_2, \dots, k_{i} +  r_i].
\]

We write ${p_i}/{q_i}$ the $i^{th}$ convergent of $\alpha$. The sequences
$(p_i)_{i\in\mathbb{N}}$ and $(q_i)_{i\in\mathbb{N}}$ are integer valued and positive,
\[
\frac{p_i}{q_i} = 
[0; k_1, k_2, \dots, k_i].
\]

We will also write $(\theta_i)_{i\in\mathbb{N}}$ the positive real sequence of
$(-1)^{i}(q_i\alpha - p_i)$ which we call the sequence of the \emph{partial
  remainders} as they are related to the tails by $r_i = \theta_i/\theta_{i-1}$.
Hereafter, we recall the recurrence relations to compute these sequences,
\[
\begin{array}{lll}
  p_{-1} = 1 & p_0 = 0 & p_i= p_{i-2} + k_ip_{i-1}, \\
  q_{-1} = 0 & q_0 = 1 & q_i= q_{i-2} + k_iq_{i-1}, \\
  \theta_{-1} = 1 & \theta_0 = \alpha & \theta_i = \theta_{i-2} - k_i\theta_{i-1}.
\end{array}
\]

We also write $\eta_i = q_i\alpha - p_i$ the sequence of the \emph{signed
  partial remainders}, which elements are of sign $(-1)^i$.  The sequence
$(\eta_i)_{i\in\mathbb{N}}$ of the signed partial remainders can be computed as
$((-1)^i\theta_i)_{i\in\mathbb{N}}$.

%%% Local Variables: 
%%% mode: latex
%%% TeX-master: "article"
%%% End: 

\subsection{Related number systems over irrational numbers}
In this section, we present two number systems based on the sequences of the
signed partial remainders $(\eta_i)_{i\in\mathbb{N}}$ and the denominators of
the convergents $(q_i)_{i\in \mathbb{N}}$ of an irrational $\alpha$. They have
been extensively studied during the second part of the $20^{th}$ century
\cite{berthe2009diophantine,Vershik1994}.

\begin{property}[{\cite[Proposition 1]{berthe2009diophantine}}]
  \label{prop_ostro}
  Given $(q_i)_{i\in \mathbb{N}}$ the denominators of the convergents of any
  irrational $0 < \alpha < 1$, every positive integer N can be uniquely written
  as
  \[
  N = 1 + \sum_{i=1}^{m}n_iq_{i-1}
  \]
  where $\left\{ 
    \begin{array}{l}
      0\le n_1 \le k_1 - 1, 0\le n_i \le k_i, \text{~for~} i\ge 2,\\
      n_i = 0 \text{~if~} n_{i+1}=k_{i+1}
    \end{array}\right.$ (``Markovian'' conditions).
\end{property}

This number system associated to the $(q_i)_{i\in \mathbb{N}}$ is named the
Ostrowski number system. To write an integer in this number system, we use a
classical decomposition algorithm (Algorithm \ref{algo_ostro}). The rank $m$ is
chosen such that $q_{m} > N$.

\begin{algorithm}
  \SetKwInOut{Input}{input}
  \SetKwInOut{Output}{output}
  \Input{$N\in\mathbb{N}$, $(q_i)_{i < m}$}
  \Output{$n_i$ such that $\displaystyle{N = 1 + \sum_{i=1}^{m}n_iq_{i-1}}$}

  $tmp \leftarrow N - 1$\;
  $i \leftarrow m$\;
  \While{$i \ge 1$}
  {
    $n_i \leftarrow \floor{tmp / q_{i-1}}$\;
    $tmp \leftarrow tmp - n_iq_{i-1}$\;
    $i \leftarrow i - 1$\;
  }

  \caption{Integer decomposition in Ostrowski number system.}  
  \label{algo_ostro}
\end{algorithm}

\begin{property}[{\cite[Proposition 2]{berthe2009diophantine}}]
  \label{prop_eta}
  Given $(\eta_i)_{i\in\mathbb{N}}$ the sequence of the signed partial remainders
  of any irrational $0 < \alpha < 1$, every real $\beta$, with $0 \le \beta < 1$
  can be uniquely written as
  \[
  \beta = \alpha + \sum_{i=1}^{+\infty} b_i\eta_{i-1}
  \]
  where $\left\{ 
    \begin{array}{l}
      0\le b_1 \le k_1 - 1, 0\le b_i \le k_i, \text{~for~} i\ge 2,\\
      b_i = 0 \text{~if~} b_{i+1}=k_{i+1}
    \end{array}\right.$ (``Markovian'' conditions).
\end{property}

There also exists two other number systems that are dual to these two. One
decomposes integers in the basis $((-1)^iq_i)_{i\in \mathbb{N}}$ and the other
decomposes reals in the basis of the unsigned partial remainders
$(\theta_i)_{i\in\mathbb{N}}$ \cite{berthe2009diophantine}. The second Markovian
condition then becomes $b_{i+1} = 0 \text{~if~} b_{i}=k_{i} $.
An algorithm to write real numbers in the $(\theta_i)_{i\in\mathbb{N}}$ number
scale has been proposed by Ito \cite{ito1986some}. It proceeds by iterating the
mapping $T_1 : (\alpha, \beta) \rightarrow (1/\alpha - \lfloor 1/\alpha \rfloor,
\beta/\alpha - \lfloor \beta/\alpha \rfloor)$.

%%% Local Variables: 
%%% mode: latex
%%% TeX-master: "article"
%%% End: 

\subsection{Related number systems over rational numbers}

In this subsection, we consider $\alpha = p/q$ rational. We recall that the
continued fraction expansion of a rational is finite. We denote
\[
\frac{p}{q} = [0; k_1, k_2, \dots, k_n]
\]
the continued fraction expansion of $p/q$, and recall $p_n = p$ and $q_n = q$.

The Ostrowski number system still holds for integers $N < q_n$, since the
keypoint in the Ostrowski number system is that there exists $q_m$ such that
$q_{m} > N$.

The $(\eta_i)_{i<n}$ number system also still holds under one supplemental
condition: $\beta$ must be rational with precision at most $q$ (i.e. the
denominator of $\beta$ must be less or equal than $q$).

%%% Local Variables: 
%%% mode: latex
%%% TeX-master: "article"
%%% End: 

\section{Modular arithmetic and continued fraction}
\label{sect_mod_arith}
In this section, we consider $\alpha = a/d$. We highlight that the same
decomposition $(b_1, \dots, b_{n+1})$ can be interpreted in two ways depending
on the number system used. In the Ostrowski number system, we obtain an integer
$N$ whereas in the number scale $(\eta_i)_{i\in \mathbb{N}}$, we obtain the
reduced value of $N\alpha \mod{1}$ \cite{berthe2009diophantine}. Hence, we will
use the fact that studying an integer $a$ modulo $d$ is similar to considering
the rational $a/d$ modulo $1$. This enables us to use properties
\ref{prop_ostro} and \ref{prop_eta} to compute modular multiplication and
division.

\subsection{Modular arithmetic and continued fraction}
\label{mod_arith_cf}

First, we briefly recall how continued fraction expansion and the Euclidean
algorithm are linked. We write $(\theta'_i)_{i\in\mathbb{N}}$ the integer
sequence of remainders when computing $\gcd(a,d)$. This sequence is composed of
decreasing values less than $d$. We also write $(\eta'_i)_{i\in\mathbb{N}}$ the
sequence $((-1)^i\theta'_i)_{i\in\mathbb{N}}$. We obtain the following
recurrence relation, and recall the recurrence relation over the
$(\theta_i)_{i\in\mathbb{N}}$ sequence of partial remainders of the continued
fraction expansion of $a/d$ :
$$
\begin{array}{lll}
  \theta'_{-1} = d & \theta'_0 = a & \theta'_i = \theta'_{i-2} -
  \lfloor \theta'_{i-2}/\theta'_{i-1}\rfloor \theta'_{i-1}\\
  \theta_{-1} = 1 & \theta_0 = a/d & \theta_i = \theta_{i-2} - \lfloor \theta_{i-2}/\theta_{i-1}\rfloor\theta_{i-1}.
\end{array}
$$

It is widely known and can be easily proved by induction that both sequences
compute the same partial quotients, that we will note $k_i$.

\begin{proof}[Proof of $k_{i+1} = \lfloor \theta_{i-1} / \theta_i \rfloor =
  \lfloor \theta'_{i-1} / \theta'_i\rfloor$]
  We prove it by proving $\theta_{i-1}/\theta_i = \theta'_{i-1} / \theta'_i$.
  \begin{itemize}
  \item {\bf Base case} : $\theta_{-1}/\theta_0 = d/a = \theta'_{-1}/\theta'_{0}$
  \item {\bf Induction} : Let $i$ such that $\theta_{i-1}/\theta_i =
    \theta'_{i-1}/\theta'_i$.
    \begin{align*}
      \frac {\theta_{i-1}}{\theta_i} & = \frac {\theta'_{i-1}}{\theta'_i} \\
      \frac{\theta_{i+1} + \lfloor \theta_{i-1}/\theta_i \rfloor
        \theta_i}{\theta_i} &= \frac{\theta'_{i+1} + \lfloor
        \theta'_{i-1}/\theta'_i \rfloor \theta'_i}{\theta'_i}\\
      \frac{\theta_{i+1}}{\theta_i} + \lfloor \theta_{i-1}/\theta_i \rfloor & = 
      \frac{\theta'_{i+1}}{\theta'_i} + \lfloor \theta'_{i-1}/\theta'_i \rfloor
    \end{align*}
    which implies ${\theta_i}/{\theta_{i+1}} =
    {\theta'_i}/{\theta'_{i+1}}$. \qedhere
  \end{itemize}
\end{proof}

It can also be noticed that $\eta'_i = \eta_id$. Actuallly, $\theta'_i =
\theta_id$ as the extended Euclidean algorithm compute the relations $\theta'_i
= (-1)^i(q_ia - p_id)$. In particular, it gives the Bezout's identity with
$\theta'_{n-1} = (-1)^{n-1}(q_{n-1}a - p_{n-1}d) = \gcd(a,d)$, and $q_{n-1}$
the inverse of $a$ if $a$ is invertible modulo $d$ ($\gcd(a,d) = 1$).

\subsection{Modular multiplication}

Now, given $a, b \in \mathbb{Z}/d\mathbb{Z}$, we write $c = a\cdot b \mod{d}$
the integer $0 \le c < d$ such that $ab - \lfloor ab/d \rfloor\cdot d = c$.

We can observe that the decompositions presented in properties \ref{prop_ostro}
and \ref{prop_eta} are both unique and both need the same ``Markovian''
condition over their coefficients. Hence, we can interpret the same
decomposition in both basis.

\begin{theorem}
  \label{theo_ch_basis}
  Given $a, b \in \mathbb{Z}/d\mathbb{Z}$, and $(q_i)_{i\le n}$, $(\eta'_i)_{i\le
    n}$ from Euclidean algorithm on $a$ and $d$, if we write $b$ in the
  $(q_i)_{i\le n}$ number scale as
  $$
  b = 1 + \sum_{i=1}^{n+1}b_iq_{i-1},
  $$ 
  then
  $$
  a\cdot b \mod{d} = a + \sum_{i=1}^{n+1}b_i\eta'_{i-1}.
  $$
\end{theorem}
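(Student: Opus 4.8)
The plan is to split the claim into an easy congruence and a range check. First I would use the Bézout relation recalled in Section~\ref{mod_arith_cf}, $\theta'_i = (-1)^i(q_i a - p_i d)$, so that $\eta'_{i-1} = (-1)^{i-1}\theta'_{i-1} = q_{i-1}a - p_{i-1}d$. Substituting this into the right-hand side and using the hypothesis $b = 1 + \sum_{i=1}^{n+1} b_i q_{i-1}$ gives
\[
a + \sum_{i=1}^{n+1} b_i\eta'_{i-1}
= a\Bigl(1 + \sum_{i=1}^{n+1} b_i q_{i-1}\Bigr) - d\sum_{i=1}^{n+1} b_i p_{i-1}
= ab - d\sum_{i=1}^{n+1} b_i p_{i-1}.
\]
As $\sum_{i=1}^{n+1} b_i p_{i-1}\in\mathbb{Z}$, this shows $a + \sum b_i\eta'_{i-1} \equiv ab \pmod{d}$, so by the definition of $a\cdot b \bmod d$ it only remains to prove $0 \le a + \sum b_i\eta'_{i-1} < d$.

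For the range, I would invoke the remark made just before the theorem: the tuple $(b_1,\dots,b_{n+1})$ produced by the Ostrowski decomposition of $b$ satisfies exactly the Markovian conditions of Property~\ref{prop_eta}. Dividing the displayed identity by $d$ and using $\eta'_{i-1} = d\,\eta_{i-1}$ (equivalently $\theta'_i = d\,\theta_i$) and $\alpha = a/d$ yields
\[
\frac{1}{d}\Bigl(a + \sum_{i=1}^{n+1} b_i\eta'_{i-1}\Bigr) = \alpha + \sum_{i=1}^{n+1} b_i\eta_{i-1}.
\]
By Property~\ref{prop_eta}---in the rational form from the end of Section~\ref{sect_num_sys_cf}, which applies because the value is rational with denominator dividing $d$---any tuple satisfying the Markovian conditions is the $\eta$-expansion of some real in $[0,1)$. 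Hence the left-hand side lies in $[0,1)$, so $a + \sum b_i\eta'_{i-1}\in[0,d)$, and together with the congruence this forces $a + \sum b_i\eta'_{i-1} = a\cdot b\bmod d$.

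The main obstacle is this range step: one has to be certain that the Ostrowski digits of $b$ are admissible $\eta$-digits and that substituting them into the $\eta$-scale genuinely lands in $[0,1)$ rather than merely producing something congruent to the target modulo $d$. This rests on two points that should be spelled out---that the Markovian conditions in Properties~\ref{prop_ostro} and~\ref{prop_eta} are literally identical, so a valid digit tuple transfers from one scale to the other, and that the $\eta$-representation is onto $[0,1)$ (its rational restriction onto the rationals with denominator dividing $d$), i.e.\ the surjective half of the cited bijection. A minor bookkeeping remark settles the index $n+1$: once the Euclidean algorithm reaches $\gcd(a,d)$ one has $\theta'_n = (-1)^n(q_n a - p_n d) = 0$, so the term $b_{n+1}\eta'_n$ vanishes and the sum has the same effective support as the finite $\eta$-scale $(\eta_i)_{i<n}$. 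The remaining work is just the algebra displayed above.
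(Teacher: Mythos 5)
Your proposal is correct and follows essentially the same route as the paper: both hinge on the identity $\eta'_{i-1} = q_{i-1}a - p_{i-1}d$ to turn the Ostrowski expansion of $b$ into $ab$ minus an integer multiple of $d$, then invoke Property~\ref{prop_eta} (via the fact that the Markovian digit conditions coincide in the two scales) to place $\alpha + \sum b_i\eta_{i-1}$ in $[0,1)$, and finally dispose of the $b_{n+1}$ term using $\eta'_n = 0$. Your reorganization into ``congruence plus range check,'' and your observation that the range step really uses the surjective half of the number-system bijection rather than uniqueness per se, is if anything a slightly more careful articulation of the same argument.
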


\begin{proof}
  First, we consider $b < q_n$, it can be written in the Ostrowski number system
  as
  $$
  b = 1 + \sum_{i=1}^{n}b_iq_{i-1},
  $$
  and the coefficients $b_i$ respect the ``Markovian'' condition of the
  Ostrowski number system. Hence,
  $$
  \alpha\cdot b = \alpha + \sum_{i=1}^{n}b_iq_{i-1}\alpha.
  $$
  By definition, $\eta_i = q_i\alpha - p_i$, thus
  $$
  \alpha\cdot b = \alpha + \sum_{i=1}^{n}b_i\eta_{i-1} +
  \sum_{i=1}^{n}b_ip_{i-1}.
  $$
  As the coefficients $b_i$'s verify the ``Markovian'' condition, the uniqueness
  of the decomposition in property \ref{prop_eta} gives \mbox{$0 \le \alpha +
    \sum_{i=1}^{n}b_i\eta_{i-1} < 1$} and $\sum_{i=1}^{n}b_ip_{i-1} \in
  \mathbb{N}$.  Hence,
  $$
  \alpha\cdot b \mod{1} = \alpha + \sum_{i=1}^{n}b_i\eta_{i-1}.
  $$
  By multiplying this inequality by $d$, as $\alpha=a/d$ and $\eta'_i =
  \eta_id$, we obtain
  $$
  a\cdot b \mod{d} = a + \sum_{i=1}^{n}b_i\eta'_{i-1}.
  $$
  which finalizes the proof of the theorem for $b < q_n$.
  
  Now if $b\ge q_n$ and $b = b_{n+1}q_n + b'$ with $b' < q_n$ the
  remainder of the division of $b$ by $q_n$, $b'$ can be uniquely written in
  the Ostrowski number system. Furthermore, as $\eta'_n = 0$, $b_{n+1}\eta'_{n}
  = 0$, which finishes the proof.
\end{proof}

\subsection{Modular division}

Inversely, given $a, b \in \mathbb{Z}/d\mathbb{Z}$, with $a$ invertible modulo
$d$ ($\gcd(a,d) = 1$) we can efficiently compute \mbox{$a^{-1}\cdot b \mod{d}$}.

\begin{theorem}
  \label{theo_eta_to_qi}

  Given $a, b \in \mathbb{Z}/d\mathbb{Z}$ with $\gcd(a,d) = 1$, and $(q_i)_{i\le
    n}$, $(\theta'_i)_{i\le n}$ from Euclidean algorithm on $a$ and $d$, if we
  write $b$ in the $(\theta'_i)_{i<n}$ number scale as
  $$
  b = \sum_{i=1}^{n+1}b_i\theta'_{i-1},
  $$ 
  then if we denote $c = \displaystyle{\sum_{i=1}^{n+1}b_i(-1)^{i-1}q_{i-1}}$,
  $$  
  a^{-1} \cdot b \mod{d} \in \{c, d+c \}.
  $$
\end{theorem}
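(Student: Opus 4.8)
The plan is to substitute the Bézout-type relations $\theta'_{i-1}=(-1)^{i-1}(q_{i-1}a-p_{i-1}d)$ recalled in Section~\ref{mod_arith_cf} into the hypothesis $b=\sum_{i=1}^{n+1}b_i\theta'_{i-1}$. This immediately splits $b$ into an $a$-part and a $d$-part:
\[
b=\Big(\sum_{i=1}^{n+1}b_i(-1)^{i-1}q_{i-1}\Big)a-\Big(\sum_{i=1}^{n+1}b_i(-1)^{i-1}p_{i-1}\Big)d=ca-Pd,
\]
with $c$ as in the statement and $P=\sum_{i=1}^{n+1}b_i(-1)^{i-1}p_{i-1}\in\mathbb{Z}$. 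Reducing modulo $d$ and using that $a$ is invertible yields $a^{-1}\cdot b\equiv c\pmod d$, hence $a^{-1}\cdot b\bmod d=c\bmod d$. It therefore suffices to localise $c$ well enough that $c\bmod d$ is forced to be $c$ or $c+d$, i.e. to prove $-d<c<d$.

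For the localisation of $c$ I would first discard the term of index $n+1$: since $\theta'_n=0$ (the Euclidean algorithm on $a,d$ terminates at step $n$), the canonical representation in the $(\theta'_i)_{i<n}$ scale has $b_{n+1}=0$, so $c=\sum_{i=1}^{n}b_i(-1)^{i-1}q_{i-1}$, and since $\gcd(a,d)=1$ we have $q_n=d$. I would then separate the odd- and even-indexed parts, $c=S^{+}-S^{-}$ with $S^{+}=\sum_{i\ \mathrm{odd}}b_iq_{i-1}$ and $S^{-}=\sum_{i\ \mathrm{even}}b_iq_{i-1}$, and bound each telescopically from the recurrence $q_i=q_{i-2}+k_iq_{i-1}$ together with the coefficient ranges $0\le b_1\le k_1-1$ and $0\le b_i\le k_i$ ($i\ge 2$) of the number scale. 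Concretely $b_1q_0\le(k_1-1)q_0=q_1-q_0$ and $b_iq_{i-1}\le k_iq_{i-1}=q_i-q_{i-2}$ for $i\ge2$, so summing over the odd indices telescopes to $S^{+}\le q_{i^\star}-q_0=q_{i^\star}-1\le q_n-1=d-1$, where $i^\star\le n$ is the largest odd index occurring; the same computation over the even indices gives $0\le S^{-}\le q_{j^\star}-1\le d-1$. Hence $|c|=|S^{+}-S^{-}|\le d-1<d$.

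Combining the two steps, from $-d<c<d$ we get $c\bmod d=c$ when $c\ge0$ and $c\bmod d=c+d$ when $c<0$, so in all cases $a^{-1}\cdot b\bmod d=c\bmod d\in\{c,d+c\}$.

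I expect the only delicate point to be the bookkeeping in the bounding step: ensuring the strict inequality $|c|<d$ genuinely survives — this is exactly why one needs $b_1\le k_1-1$ rather than $b_1\le k_1$, and why the index-$(n+1)$ coefficient, which multiplies the vanishing $\theta'_n$ but would otherwise add $\pm b_{n+1}d$ to $c$, must be argued away — and checking that for both parities of $n$ the largest index appearing in $S^{+}$ or $S^{-}$ is still $\le n$. The Markovian zero-pattern condition is not needed for the inequality; only the range bounds on the $b_i$ and the uniqueness of the decomposition are used.
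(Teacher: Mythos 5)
Your overall route coincides with the paper's: obtain $b\equiv ca\pmod d$ from the Bézout relations $\theta'_i=(-1)^i(q_ia-p_id)$, then localise $c$ in $(-d,d)$ by telescoping with $k_iq_{i-1}=q_i-q_{i-2}$. The paper phrases the second step by exhibiting the extremal decompositions $(k_1,0,k_3,0,\dots)$ and $(0,k_2,0,k_4,\dots)$, which telescope to $q_n$ and $1-q_n$; your term-by-term bound on $S^{+}$ and $S^{-}$ is the same computation. Your treatment of the index-$(n+1)$ term and of the congruence step is fine (indeed more explicit than the paper's ``similar to Theorem \ref{theo_ch_basis}'').

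The one step that does not go through as written is $S^{+}\le d-1$. You import the digit condition $0\le b_1\le k_1-1$ from Property \ref{prop_eta}, but that condition belongs to the $(\eta_i)$ and Ostrowski systems; for the dual $(\theta_i)$/$(\theta'_i)$ scale used here the paper only states that the second Markovian condition reverses, and the first digit can in fact equal $k_1$. Concretely, for $a=2$, $d=5$ (so $k_1=k_2=2$, $\theta'_0=2$, $\theta'_1=1$) and $b=4$, the greedy/Ito decomposition is $b_1=2=k_1$, $b_2=0$. With the correct range $0\le b_1\le k_1$ your telescoping only yields $S^{+}\le q_{i^\star}\le q_n=d$, hence $|c|\le d$, and the case $c=d$ must still be excluded. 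It can be: $S^{+}=d$ together with $S^{-}=0$ forces $b=\sum k_{2i+1}\theta'_{2i}=\theta'_{-1}-\theta'_n=d-0=d$, impossible since $b<d$. (The paper is equally loose on exactly this point: it computes the value of the maximal decomposition as $q_n=d$ and then asserts the strict inequality $c<d$ without comment.) With that one repair your argument is complete and matches the paper's.
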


\begin{proof}
  The proof of correctness is similar to the one of theorem \ref{theo_ch_basis},
  using the facts that $\theta'_i = \theta_id$ and that $\theta_i =
  (-1)^i(q_i\alpha - p_i)$.

  Now, the greatest integer $c$ is clearly the one associated to the
  decomposition $(k_1, 0, k_3, 0, \dots, k_{n})$ when $n$ is odd. However,
  $k_iq_{i-1} = q_i - q_{i-2}$ by definition, which implies
  $$\sum_{i=0}^{(n-1)/2} k_{2i+1}q_{2i} = q_n.$$

  The smallest integer that can be returned is clearly the one associated to
  the decomposition $( 0, k_2, 0, k_4, \dots, k_{n})$ when $n$ is even.
  Once again, as $k_iq_{i-1} = q_i - q_{i-2}$, we get
  $$ - \sum_{i=1}^{n/2} k_{2i}q_{2i-1} = 1 - q_n.$$

  Hence, $-d < \sum_{i=1}^{n+1}b_i(-1)^{i-1}q_{i-1} < d$, that is to say, the
  result needs at most a correction by an addition by $d$.
\end{proof}

We mention that we also tried to decompose $b$ in the $(\eta'_i)_{i\le n}$
signed remainders number scale and evaluate this same decomposition in the
$(q_i)_{i \le n}$ number scale to compute modular division. We used Ito $T_2$
transform \cite{ito1986some} $T_2 : (\alpha, \beta) \rightarrow (1/\alpha -
\lfloor 1/\alpha \rfloor, \lceil \beta/\alpha \rceil - \beta/\alpha)$. In
practice, it returns the right result without the need of any
correction. However, as the decomposition computed by Ito $T_2$ transform does
not verify the same ``Markovian'' conditions as in the Ostrowski number system,
we were not able to give a theoretical proof that it always returns the reduced
result of the modular division.

%%% Local Variables: 
%%% mode: latex
%%% TeX-master: "article"
%%% End: 

\section{Elements of Complexity Analysis}
\label{sect_comp}
In this section, we introduce elements of complexity analysis of the proposed
modular multiplication algorithm based on theorem \ref{theo_ch_basis}. The same
analysis holds for the division.

First, the algorithm computes $(q_i)_{i\le n}$ and $(\eta'_i)_{i\le n}$.  This
can be computed using the classical extended Euclidean algorithm in
$O(\log{(d)}^2)$ binary operations. We notice here that the divisions computed
in the Euclidean algorithm can be computed by subtraction as the mean computed
quotient equals to Khinchin's constant (approximately $2.69$)
\cite[p. 93]{Khinchin92}. Furthermore, big quotients are very unlikely to occur
as the quotients of any continued fraction follow the Gauss-Kuzmin distribution
\cite[p. 83]{Khinchin92} \cite[p. 352]{TAOCP2},
$$
\mathbb{P}(k_i = k) = -\log_2\left( 1 - \frac{1}{(k+1)^2} \right).
$$

Second, the decomposition in $(q_i)_{i\le n}$ as in algorithm~\ref{algo_ostro}
also clearly has complexity in $O(\log{(d)}^2)$.  By the same arguments, the
coefficients of the decomposition in $(q_i)_{i\le n}$ can be computed by
subtraction as they are likely small.  The only quotient not following the
Gauss-Kuzmin distribution is the coefficient $b_{n+1}$ as it corresponds to the
quotient $\lfloor b/q_n \rfloor$. We prove in \ref{app_A} that if $a, d$ are
uniformly chosen integers in $[1,N]$ and $b$ is uniformly chosen in $[1,d]$,
then when $N$ tends to infinity, $\mathbb{P}(b_{n+1} \le k )$ tends to
\[
\zeta (2)^{-1} \left[ \sum_{i=1}^{k+1}\frac{i - (k+1)}{i^3} + (k+1)\zeta(3) \right].
\]

\begin{figure}
  \centering
  \begin{tikzpicture}
    \begin{axis}[
      width=0.7\linewidth,
      grid = major,
      xtick={0,5,...,64},
      xmin=0, xmax=49,
      minor x tick num={4},
      ytick={0.7, 0.75, ...,1},
      ymin=0.70, ymax=1,
      minor y tick num={4},
      xlabel=Max expected $b_{n+1}$, 
      ylabel=Probability,
      ]
      \addplot[mark=.] file {proba.dat};
    \end{axis}
  \end{tikzpicture}

  \caption{Probability law of the value of the coefficient $b_{n+1}$}
  \label{fig_proba}
\end{figure}
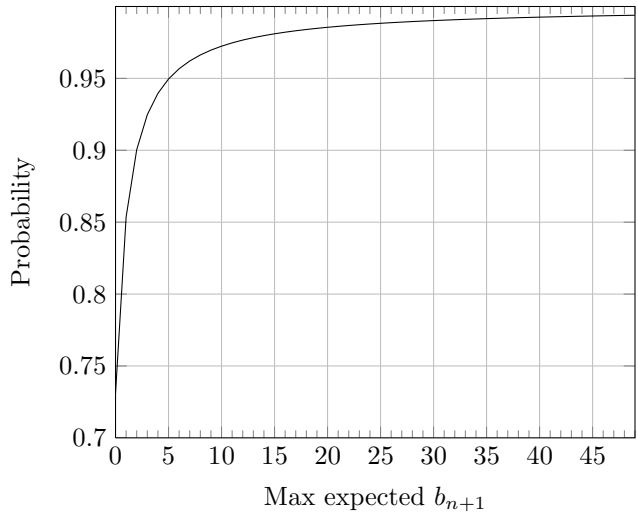

Figure \ref{fig_proba} shows the probability distribution of $\mathbb{P}(b_{n+1}
\le k)$. In particular, we obtain \mbox{$\mathbb{P}(b_{n+1} \le 3 ) \approx 92.5
  \%$}.

To finish the complexity analysis, evaluating the sum to return the final result
can also be done in $O(\log{(d)}^2)$.

%%% Local Variables: 
%%% mode: latex
%%% TeX-master: "article"
%%% End: 

\section{Perspectives}
\label{sect_persp}

In this paper, we presented an algorithm for modular multiplication and an
algorithm for modular division. Both are based on the extended Euclidean
algorithm and are of quadratic complexity in the size of the modulus.

Furthermore, the two stated theorems imply that, knowing the remainders
generated when computing the $\gcd$ of a number $a$ and the modulus $d$, one can
compute efficiently reduced multiplications by $a$ or $a^{-1}$. This can be
useful in algorithms computing several multiplications and/or divisions by the
same number $a$, as in the Gaussian elimination algorithm for example.

The presented algorithms can also be useful in hardware implementation of
modular arithmetic. They allow to perform inversion, multiplication and division
with the same circuit.

Further investigations have to be led to find optimal decomposition algorithms,
that minimize the number of coefficients of the produced decomposition and their
size. Also, we are working on an efficient software implementation of these
algorithms.

%%% Local Variables: 
%%% mode: latex
%%% TeX-master: "article"
%%% End: 

\section{Aknowledgement}
This work was supported by the TaMaDi project of the french ANR (grant ANR 2010
BLAN 0203 01). This work has also been greatly supported and improved by many
helpful proof readings and discussions with Jean-Claude Bajard, Valérie Berthé,
Pierre Fortin, Stef Graillat and Emmanuel Prouff.

\section*{References}
\bibliographystyle{elsarticle-num} 
\bibliography{MyCollection}

\appendix
\section{Detailed proof of the distribution function of \mbox{$\{b_{n+1} <
    k\}$.}}
\label{app_A}

Let $U_1, U_2$ and $U_3$ be three independent uniform distributions over
$[0,1]$.  We write $a = \lceil U_1N \rceil$, $d = \lceil U_2N \rceil$ and $b =
\lceil U_3d \rceil$. We denote $A = \{b < (k+1)q_n\}$, $B = \{ \gcd (a,d) \le k
+ 1\}$, $\bar{B} = \{\gcd (a,d) > k + 1\}$ and ${B_i} = \{ \gcd(a,d) = i
\}$. Hence using the law of total probability we have
\begin{align*}
  \mathbb{P}(A) &= \mathbb{P}(A \cap B) + \mathbb{P}(A \cap \bar{B}), \\
  &= \bigsqcup_{i \le k+1}\mathbb{P}(A \cap B_i) + \bigsqcup_{i >
    k+1}\mathbb{P}(A \cap {B_i}),\\
  &= \bigsqcup_{i \le k+1}\!\!\mathbb{P}(A |B_i)\!\cdot\!\mathbb{P}(B_i) +
  \bigsqcup_{i > k+1}\!\!\mathbb{P}(A | {B_i})\!\cdot\!\mathbb{P}({B_i}).
\end{align*}

As the ${B_i}$ are disjoint events, we have
\[
\mathbb{P}(A) = \sum_{i= 1}^{k+1}~\mathbb{P}(A | B_i)\cdot\mathbb{P}(B_i) +
\sum_{i= k+2}^{+\infty}~\mathbb{P}(A | {B_i})\cdot\mathbb{P}({B_i}).
\]

First, $\mathbb{P}(A | B_i) = 1$ for $i \le k+1$ as $b < d = \gcd(a,d)\cdot q_n
\le (k+1)\cdot q_n$. Hence,
\[
\mathbb{P}(A) = \sum_{i= 1}^{k+1}\mathbb{P}(B_i) + \sum_{i=
  k+2}^{+\infty}~\mathbb{P}(A | {B_i})\cdot\mathbb{P}({B_i}).
\]

Now we want to determine $\mathbb{P}(A | B_i)$ for $i \ge k+2$. Hereafter, we
write $\mathbb{Q}_i(\cdot) = \mathbb{P}(\cdot | B_i)$ and
\begin{align*}
  \mathbb{P}(A|B_i) &= \mathbb{Q}_i(A), \\
  &= \sum_{l=1}^N \sum_{m=1}^N \mathbb{Q}_i(\{a=l\} \cap \{d=m\})\cdot
  \mathbb{Q}_i(A~|~\{a=l\} \cap \{d=m\}).
\end{align*}
However, 
\[
\mathbb{Q}_i(A~|~\{a=l\} \cap \{d=m\}) = \frac{k+1}{i}
\]
as $b$ is uniformly distributed between $1$ and $d = iq_n$. If we consider the
segment of length $d$ and slice it in $i$ segments of length $q_n$, it can be
interpreted as the probability that $b$ is in the first $k+1$ slices. Hence
\begin{align*}
  \mathbb{P}(A|B_i) &= \sum_{l=1}^N \sum_{m=1}^N \mathbb{Q}_i(\{a=l\} \cap
  \{d=m\})\cdot \frac{k+1}{i},\\
  &= \frac{k+1}{i} \cdot \sum_{l=1}^N \sum_{m=1}^N \mathbb{Q}_i(\{a=l\} \cap
  \{d=m\}).
\end{align*}

As $\{a=l\}$ and $\{d=m\}$ are independent by hypothesis ($U_1$ and $U_2$ are
independent), $$\mathbb{Q}_i(\{a=l\} \cap
\{d=m\})=
\mathbb{Q}_i(\{a=l\})\cdot\mathbb{Q}_i(\{d=m\}),$$ and
$$
  \mathbb{P}(A|B_i) = \frac{k+1}{i} \cdot \sum_{l=1}^N \mathbb{Q}_i(\{a=l\})
  \cdot\!\sum_{m=1}^N \mathbb{Q}_i(\{d=m\}).
$$

Now, we use the fact that the sum of the probabilities over the whole sample
space always sum to $1$ to obtain
$$
  \mathbb{P}(A|B_i) = \frac{k+1}{i}.
$$

If we recapitulate, 
\[
\mathbb{P}(A) = \sum_{i= 1}^{k+1}~\mathbb{P}(B_i) + \sum_{i=
  k+2}^{+\infty}~\frac{k+1}{i}\cdot\mathbb{P}({B_i}).
\]

Finally, it is widely known that $\mathbb{P}({B_i})$ tends to
$\frac{\zeta(2)^{-1}}{i^2}$ when $N$ tends to infinity
\cite[p. 353]{ITN_HW}. Hence, we get
\begin{align*}
  \lim_{N\rightarrow +\infty}\mathbb{P}(A) &=
  \sum_{i=1}^{k+1}\frac{\zeta(2)^{-1}}{i^2} + \sum_{i=k+2}^{\infty}
  \frac{k+1}{i}\cdot\frac{\zeta(2)^{-1}}{i^2},\\
  &= \zeta(2)^{-1}\left[ \sum_{i=1}^{k+1}\frac{1}{i^2} +
    (k+1)\sum_{i=k+2}^{+\infty}\frac{1}{i^3} \right],
\end{align*}
which equals to
\begin{align*}
  & \zeta(2)^{-1}\left[ \sum_{i=1}^{k+1}\frac{1}{i^2} + (k+1)\left(
      \sum_{i=1}^{+\infty}\frac{1}{i^3} - \sum_{i=1}^{k+1}\frac{1}{i^3} \right)
  \right],\\
  =\;& \zeta(2)^{-1}\left[ \sum_{i=1}^{k+1}\frac{i - (k + 1)}{i^3} + (k+1)\left(
      \sum_{i=1}^{+\infty}\frac{1}{i^3} \right) \right].
\end{align*}

By definition, Riemann zeta function equals
\[
\zeta(s) = \sum_{i=1}^{+\infty}\frac{1}{i^s}.
\]
Hence we get the following simplification, which is more convenient for
computation and has been used to generate Fig. \ref{fig_proba},
$$
\lim_{N\rightarrow +\infty}\!\!\!\mathbb{P}(A) \!=\! \zeta(2)^{-1}\!\left[
  \sum_{i=1}^{k+1}\!\frac{i - (k +
    1)}{i^3}\!+\!(k\!+\!1)\!\cdot\!\zeta(3)\right]\!.
$$

%%% Local Variables: 
%%% mode: latex
%%% TeX-master: "article"
%%% End: 

\listoftodos

\end{document}